\definecolor{Gray}{gray}{0.85}
\definecolor{LightCyan}{rgb}{0.88,1,1}
\newtheorem{axiom}{Axiom}
\newcolumntype{a}{>{\columncolor{Gray}}c}
\newcolumntype{b}{>{\columncolor{white}}c}
\newcommand{\keywords}[1]{\par\addvspace\baselineskip
\noindent\keywordname\enspace\ignorespaces#1 }
\begin{document}

\mainmatter  

\title{CRT and Fixed Patterns in Combinatorial Sequences }

\titlerunning{CRT and Fixed Patterns in Combinatorial Sequences }

%
%
\author{Muhammad Asad Khan%
\and Amir Ali Khan\and Fauzan Mirza}
\authorrunning{M Asad, Amir A Khan, Fauzan Mirza}

\institute{National University of Sciences and Technology,\\
Islamabad, Pakistan\\
\mailsa,
\mailsb,
\mailsc\\}
%
%

\toctitle{Lecture Notes in Computer Science}
\tocauthor{Authors' Instructions}
\maketitle

\begin{abstract}
 In this paper, new context of Chinese Remainder Theorem (CRT) based analysis of combinatorial sequence generators has been presented.  CRT is exploited to establish fixed patterns in LFSR sequences and underlying cyclic structures of finite fields. New methodology of direct computations of DFT spectral points in higher finite fields from known DFT spectra points of smaller constituent fields is also introduced. Novel approach of CRT based structural analysis of LFSR based combinatorial sequence is given both in time and frequency domain. The proposed approach is demonstrated on some examples of combiner generators and is scalable to general configuration of combiner generators.
\keywords{CRT, LFSR, DFT, combinatorial generators.}
\end{abstract}

\section{Introduction}
Chinese Remainder Theorem (CRT) is known for centuries as a solution of congruences in number theory  and was appeared in a mathematical classics of Sun Tzu, a mathematician in ancinet China. It is termed as one of the jewels of mathematics and has diverse applications in number theory, abstract algebra, theory of automata, digital signal processing and cryptology. Its magical applications have been classified in three 'C's' which are Computing with various aspects of algorithmics and modular computations,  Theory of Codes and Cryptography~\cite{dingchinese}. From an analytical perspective, CRT is basically a manifestation of addressing complex problems through divide and conquer approach. In other words big structures represented mathematically through their smaller parts mapping the harder problems to their smaller equilvalents and making the analysis easy. In the filed of cryptology, CRT has been known for secret sharing schemes, RSA-CRT and rebalanced RSA-CRT. Continual to new contexts of CRT, new results on applications of CRT in analysis of LFSR based sequneces have been presented in this paper.

 This paper shows that there exist hidden structures in underlying finite fields related to LFSR based combinatorial sequences  which can be exploited through CRT. Number of constituent LFSRs in a combiner generator posses certain fixed patterns in their base finite fields which can be directly mapped through CRT to resultant fields even being combined through non linear functions. These results are consistent both in time and frequency domain. Direct computation of spectral components in higher fields from smaller field spectral components through CRT is yet  a new idea introduced in this paper.  CRT based direct relevance of components of smaller fields to higher fields is novel in associated finite fields theory of combinatorial sequence generators and has obvious usefullness in coding theory and cryptology.

The paper is organized as follows: Section 2 describes the mathematical priliminaries on the subject. In section 3, CRT based fixed patterns existing in the product sequences both in time and frequency domain have been deliberated upon. Section 4 covers the generalized case of combinatorial sequence generators and new methodology to compute spectral components in higher fields from spectral components of contituent fields is given. Comparison of computational complexity of proposed methodology of DFT computataions viz-a-viz classical DFT methods is also included in this section. In Section 5, applications of our results on CRT based fixed structures in cryptanalysis are discussed with small example of a combiner generator. The paper is final concluded in Section 6. 
\section{Mathematical Priliminaries}
Classical theory on LFSR sequences and their applications in cryptology can be found in ~\cite{golomb1982shift}, ~\cite{golomb2005signal} and ~\cite{rueppel1986analysis}. In this section, basic fundamentals related to algebraic theory of LFSR sequences and their frequency domain representaions have been presented. By analyzing the sequences in both time and frequency domain simultaneously, fixed structures related to LFSR sequences and underlying finite fields are highlighted which are considered useful in coding theory and cryptanalysis.

Discrete Fourier Transform (DFT) is considered one of the most important discovery in the area of  signal processing. DFT presents us with an alternate mathematical tool that allows us to examine the frequency domain behaviour of signals, often revealing important information not apparent in time domain. DFT $S_k$ of an n-point sequence $s_i$ is expressed in terms inner product between the sequence and set of complex discrete frequency exponentials:

\begin{equation}\label{DFT-dsp eq}
  S_{k} = \sum_{i=0}^{n-1} s_{i}e^{-j2\pi ik/n} , \;\;\;  k = 0,1,2,.....,n-1
\end{equation}
 The term $e^{-j2\pi ik/n}$ represents discrete set of exponentials. Alternatively, $e^{-j2\pi/n}$ can be viewed as $n^{th}$ root of unity.  


Analogous to the classical DFT, a DFT for a periodic signal $s_t$ with period $n$ defined over a finite field $GF(2^m) $  is represented as

\begin{equation}\label{DFT eq}
  S_{k} = \sum_{t=0}^{n-1} s_{t}\alpha^{tk} , \;\;\;  k = 0,1,2,.....,n-1
\end{equation}
 where $S_k$ is $k$-th frequency component of DFT and $\alpha$ is the primitive element; generator of $GF(2^m) $ with period $n$ ~\cite{pollard1971fast}. For Inverse DFT, we will have a relation 
\begin{equation}\label{IDFT eq}
  s_t = \sum_{k=0}^{n-1} S_{k}\alpha^{-tk} , \;\;\;  k = 0,1,2,.....,n-1
\end{equation}

Similarly for polynomials, we have a relation for DFT and IDFT. Having a correspondence between a minimum polynomial and its associated sequence $s_t$ with $s(x) = \sum_{t=0}^{n-1}s_{t} x^{t}$ and $S(x) = \sum_{k=0}^{n-1}S_{k} x^{k}$, following relation holds for DFT~\cite{golomb2005signal}: 
\begin{equation}\label{DFT eq-poly}
  S_{k} = s(\alpha^{-k}), \;\;\;  k = 0,1,2,.....,n-1
\end{equation}
and similarly for IDFT:
\begin{equation}\label{DFT eq-poly}
  s_t = S(\alpha^{t}), \;\;\;  t = 0,1,2,.....,n-1
\end{equation}
The same sequence $s_t$ can also be expressed in terms of its trace representation~\cite{gong1999transform}; a linear operator from $GF(2^m)$ to its subfiled $GF(2)$. Let $Tr_{1}^{m}(x) = \sum_{k=0}^{m-1} {x^{2}}^{k}$ be the trace mapping from $GF(2^m)$ to $GF(2)$, then $m$ sequence $s_t$ can be represented as:
\begin{equation}\label{eq:trace}
s_t = Tr_{1}^{m} (\beta \alpha^t)
\end{equation}

 where $\alpha$ is a generator of a cyclic group $GF(2^{m})^*$ and is called as primitive element of $GF(2^{m})$. Note that $\beta \in GF(2^{m})$ and each of its nonzero value corresponds to cyclic shift of the $m$-sequence generated by an LFSR with primitive polynomial $f(x)$. Importance of this interpretation of $m$-sequence is that different sequences constructed from root $\alpha$ of primitive polynomial $f(x)$ are cyclic shifts of the same $m$-sequence. The associated linear space $G(f)$ of dimension $m$ contains $2^{m}$ different binary sequences including all 0s sequence as:
 \begin{equation}\label{eq:gf}
G(f) = \{\; \tau^{i}s \; \vert \; 0\; \leq \;i \;\leq \;2^m - 2\; \} \bigcup \{0\}   
\end{equation}
 where $\tau$ is a left shift operator and represents a linear transformation of sequence $s_t$.
According to Blahut's famous theorem, the linear complexity of a peridic sequence over $GF(2^m)$ of period $n$ is equal to the hamming weight of its fourier transform, provided a fourier transform of block length $n$ exists~\cite{Blahut1983errorcontrol}. All DFT components of an LFSR sequence $\in GF(2^m)$.

The zero components in the Fourier spectrum of a sequence over $GF(2^m)$ are related to the roots of a polynomial of that sequence. For example, DFT of an LFSR sequence with feedback polynomial $f(x) = x^3+x+1$ initialized with state $001$ is $0, 0, 0, \alpha^4 , 0, \alpha^2 , \alpha$. As roots of $f(x)$ are $\alpha$ alongwith its conjugates i.e. $\alpha^2$ and $\alpha^4$, so first, second and fourth spectral components are zero. Indices of non zero DFT points for LFSR with minimum polynomial and no multiple roots also follow a fixed pattern. If $k$-th component of spectral sequence is non zero then all $(2^{j}k)\; mod\; n$ components will be harmonics of the $k$-th component where  $1\leq j \leq m-1$.
As DFT of a time domain signal comprises of a fundamental frequency and its harmonics, DFT of an LFSR sequence based on a minimal polynomial with no multiple roots also comprises of $\alpha^i \in GF(2^m)$  and its harmonics $\alpha^{i{^j} mod\mbox{\;} n} \in GF(2^m)$ with $0\leq i \leq n-1$. This harmonic pattern can be efficiently exploited in cryptanalysis attacks on LFSR based sequence generators.

 Let two sequences related by a time shift $u_{t} = s_{t+\tau}$, their DFTs $U_{k}$ and $S_{k}$ are related as:
\begin{equation}\label{DFT shift eq}
  U_{k} = \alpha^{k\tau} S_{k}, \;\;\;\;\;   k = 0,1,....,n-1
\end{equation}

Indices of non-zero spectral points of an LFSR sequence does not change with the shift in LFSR sequence. A non-zero $k$-th component of DFT of an LFSR sequence will always be non-zero. Any shift in LFSR sequence will only change the value at this component by Equation~(\ref{DFT shift eq}). Converse is also true for zero spectral points of an LFSR sequence which will always be zero no matter how much sequence is shifted.

A binary sequence $s_t$ can be represented in terms of trace function with spectral componenets as follows:-

\begin{equation}\label{eq:trace in DFT}
s_t = \sum_{j\in \Gamma(n)} Tr_{1}^{m_j} (A_j \alpha^{-jt}),\;\;\;\;\;t = 0,1,....,n-1
\end{equation}
 where $Tr_{1}^{m_j}$ is a trace function from $GF(2^m)$ to $GF(2)$, $A_j\in GF(2^m)$ and $\Gamma(n)$ is a set of  cyclotomic coset leaders modulo $n$.

\section{CRT and Underlying Finite Field Theory of  Product Sequences} 

In this section,  analysis of a product sequence generated through multiplication of two LFSRs sequences is presented which includes new results on underlying algebraic theory of finite fields. A CRT based linear structure existing in the time and frequency domain representation of the product sequence is presented which renders itself useful for coding theory and cryptanalysis of LFSR based sequence generators.
We build our analysis by starting with a simple case of  multiplication of output sequences of two LFSRs and illustrate  our novel observations on fixed structures existing in the time as well as frequency domian representation of product sequences. The observations of this special case will be generalized to a combinatorial generators in the next section.

\begin{theorem}
\label{CRT-theorem}
Let $s_t\in GF(2^m)$ be a reference product sequence with period $n \mid 2^m - 1$ having two constituent LFSRs defined over primitive polynomials with individual periods $n_1$ and $n_2$. With different shifts $k_1$ and $k_2$ in initials states of LFSRs, resulting output sequnece $u_t$ is correlated to $s_t$ by $u_{t} = s_{t+\tau}$ where shift $\tau$ is determined through CRT as

      \begin{eqnarray*}
      \tau  &\equiv& k_1\mbox{\; (mod} \mbox{\;}n_1) \\
      \tau  &\equiv& k_2\mbox{\; (mod} \mbox{\;}n_2)   
               \end{eqnarray*}

\end{theorem}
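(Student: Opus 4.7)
The plan is to unwind what the product sequence actually looks like and then invoke the Chinese Remainder Theorem on the individual periods, using the periodicity of each constituent LFSR sequence.

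First I would write the constituent LFSR sequences as $a_t$ and $b_t$, with minimal periods $n_1$ and $n_2$ respectively, so that the reference product sequence is $s_t = a_t\, b_t$. The shifted configuration with initial-state shifts $k_1$ and $k_2$ then produces the sequence $u_t = a_{t+k_1}\, b_{t+k_2}$. The goal reduces to exhibiting a single integer $\tau$ such that simultaneously $a_{t+\tau} = a_{t+k_1}$ and $b_{t+\tau} = b_{t+k_2}$ for every $t$, because then $s_{t+\tau} = a_{t+\tau}\, b_{t+\tau} = a_{t+k_1}\, b_{t+k_2} = u_t$.

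Next I would note that because the two feedback polynomials are primitive of degrees $m_1$ and $m_2$ with $m_1 + m_2 = m$, their periods are $n_1 = 2^{m_1}-1$ and $n_2 = 2^{m_2}-1$, and the hypothesis $n \mid 2^m - 1$ together with $\gcd(2^{m_1}-1, 2^{m_2}-1) = 2^{\gcd(m_1,m_2)}-1$ forces $\gcd(n_1, n_2) = 1$ for the product sequence to achieve period $n = n_1 n_2$. With coprimality secured, CRT delivers a unique $\tau \pmod{n_1 n_2}$ satisfying $\tau \equiv k_1 \pmod{n_1}$ and $\tau \equiv k_2 \pmod{n_2}$. Then periodicity of $a_t$ under $n_1$ gives $a_{t+\tau} = a_{t+k_1}$ and periodicity of $b_t$ under $n_2$ gives $b_{t+\tau} = b_{t+k_2}$, which is exactly what the preceding paragraph needed. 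Equation~(\ref{DFT shift eq}) can then be cited to transfer this time-domain shift to the frequency-domain relation $U_k = \alpha^{k\tau} S_k$ if one wants the spectral counterpart.

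The main obstacle, and really the only subtle point, is justifying the coprimality $\gcd(n_1, n_2) = 1$ that makes CRT applicable and that makes $n = n_1 n_2$ the actual period of $s_t$ rather than a proper divisor; this should be stated explicitly as a standing assumption in the theorem (primitive polynomials of coprime degrees, or equivalently coprime constituent periods). Once coprimality is in hand, the remainder of the argument is a one-line calculation, and the expected benefit of presenting it formally is less the depth of the proof and more the way it pins down, via CRT, exactly how shifts in the individual registers translate to shifts of the combined sequence.
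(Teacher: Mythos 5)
Your proof is correct and follows essentially the same route as the paper's: both arguments reduce to the observation that position $i$ of the product sequence is determined by the residue pair $(i \bmod n_1,\; i \bmod n_2)$, so a per-register shift $(k_1,k_2)$ is realized by the unique $\tau$ that CRT yields from $\tau \equiv k_1 \pmod{n_1}$ and $\tau \equiv k_2 \pmod{n_2}$, with periodicity of $a_t$ and $b_t$ closing the argument (the paper packages these same two facts as its Axioms~1 and~2). You make the coprimality hypothesis explicit where the paper hides it inside Axiom~2, which is an improvement; the only slip is the incidental claim that $m_1+m_2=m$ (in the paper's own example $m_1=2$, $m_2=3$, $m=6$), but nothing in your argument depends on it, since $\gcd(n_1,n_2)=1$ already follows from $\gcd(2^{m_1}-1,2^{m_2}-1)=2^{\gcd(m_1,m_2)}-1$.
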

\begin{proof}
Within a cyclic group $GF(2^m)$, associated linear space $G(f)$ of dimension $m$ contains $2^{m}-1$ non-zero binary sequences by~(\ref{eq:gf}).
\\ As $s_t$ and $u_t$ both $\in$ $GF(2^m)$, they are shift equilvalents by~(\ref{eq:shift}) with unknown shift value of $\tau$. 
\\ The product sequnec $s_t$ of $a_t$ and $b_t$ can be expressed as
\begin{equation}
s_i = a_{j}.b_{v} 
\end{equation}

 where  $0\leq i \leq n-1$ , $0\leq j \leq n_{1}-1$ and   $0\leq v \leq n_{2}-1$. 
\begin{axiom}
\label{L-1}
While contributing towards a product sequence of length $n$ with two LFSRs, stream of LFSR-1 defined over $GF(2^p)$ with primitive polynomial and its maximum period $2^{p}-1$  is repeated $\delta_1$ times while LFSR-2 defined over $GF(2^q)$ with primitive polynomial as well and corresponding period $2^{q}-1$ is repeated $\delta_2$ where

\begin{eqnarray*}
      \delta_1  &= & \frac{\mbox{lcm}(n_1,n_2)}{n_1}\mbox{\;,\;\; and\;}  \\
      \delta_2  &= & \frac{\mbox{lcm}(n_1,n_2)}{n_2}   
               \end{eqnarray*} 
               \end{axiom}

\begin{axiom}
\label{L-2}
 Within a sequence of period $n$ for a product sequence, each value of index $j$ corresponds to all values of index $v$ if and only if $gcd(n_1,n_2)=1$.
\end{axiom}
From Axioms~\ref{L-1} and~\ref{L-2}, any shift in LFSRs initial states will produce output corresponding to some fixed indices of $j$ and $v$ which already existed in the refernce sequence at some fixed place with initial states of LFSRs without shift.\\
\\ With known values of $j$ and $v$ i.e. $k_{i's}$, CRT will give us the value of $\tau$ mod $n$ as
\begin{eqnarray*}
      \tau  &\equiv& k_1\mbox{\; (mod} \mbox{\;}n_1) \\
     \tau  &\equiv& k_2\mbox{\; (mod} \mbox{\;}n_2)   
               \end{eqnarray*}
\end{proof}
 Let we explain the facts with an example.
\begin{example}\label{E-1}
Let we have a sequence $s_t$ generated from product of two LFSRs having primitive p[olynomials of $g_1(x) =  x^2+x+1$ and $g_2(x)  = x^3+x+1$. The period $n_1$ of stream $a_t$ corresponding to LFSR-1 is $3$ and $n_2$ of $b_t$ corresponding to LFSR-2 is $7$. The period $n$ of $s_t$ is $21$.

Table~\ref{tab:cyclic-m} demonstrates product of two $m$ sequences generated from these two LFSRs. 
 \begin{table}[H]
\small
\begin{center}
\caption[Sample Table]{Product sequence of 2x LFSRs with $n_1 = 3$ and $n_2 = 7$}
\begin{tabular}{|c| c| c| c| c| c| c| c| c| c| c| c| c| c| c| c| c| c| c| c| c| }
\hline
 0 & 1 & 2 & 3 & 4 & 5 & 6 & 7 & 8 & 9 & 10 & 11 & 12 & 13 & 14 & 15 & 16 & 17 & 18 & 19 & 20  \\ \hline
  
$a_{1}$ & $a_{2}$ &$a_{3}$ & $a_{1}$ & $a_{2}$ & $a_{3}$ & $a_{1}$ & $a_{2}$ &$a_{3}$ & $a_{1}$ & $a_{2}$ & $a_{3}$ & $a_{1}$ & $a_{2}$ & $a_{3}$ & $a_{1}$ & $a_{2}$ & $a_{3}$ & $a_{1}$ & $a_{2}$ & $a_{3}$  \\ \hline

 $b_{1}$ & $b_{2}$ &$b_{3}$ & $b_{4}$ & $b_{5}$ & $b_{6}$ & $b_{7}$ &  $b_{1}$ & $b_{2}$ &$b_{3}$ & $b_{4}$ & $b_{5}$ & $b_{6}$ & $b_{7}$ & $b_{1}$ & $b_{2}$ &$b_{3}$ & $b_{4}$ & $b_{5}$ & $b_{6}$ & $b_{7}$  \\ \hline

 $s_{1}$ & $s_{2}$ &$s_{3}$ & $s_{4}$ & $s_{5}$ & $s_{6}$ & $s_{7}$ &  $s_{8}$ & $s_{9}$ &$s_{10}$ & $s_{11}$ & $s_{12}$ & $s_{13}$ & $s_{14}$ & $s_{15}$ & $s_{16}$ &$s_{17}$ & $s_{18}$ & $s_{19}$ & $s_{20}$ & $s_{21}$  \\ \hline
\end{tabular}

\label{tab:cyclic-m}
\end{center}

\end{table}

We analyze the impact of shift on LFSR sequences and their behaviour in cyclic stuctures of finite fields involved. We will shift the LFSR sequences one by one and observe the fixed patterns which can be exploited in cryptanalysis of the combiner generators in particular. We can represent shifts in LFSRs sequences with $k$ and $l$ as 

 \begin{equation}\label{eq:twoLFSR-shift}
s_t =   a_{i+k}.b_{i+l}\;,\;\;\;\;\;\;\;\mbox{with} \; 0\leq \; i \; \leq n-1
\end{equation}

where $k \in [0,n_{1}-1]$ and $l \in [0,n_{2}-1]$. Table~\ref{tab:1-bit-a} demonstrates the scenerio where $a_t$ is left shifted by one bit while keeping the $b_t$ fixed with initial state of '1'.

\begin{table}[H]
\small
\begin{center}
\caption[Sample Table]{Product sequence with $a_{t}$ shifted left}
\begin{tabular}{|c| c| c| c| c| c| c|c| c| c| c| c| c| c| c| c| c| c| c| c| c| }
\hline

 7 & 8 & 9 & 10 & 11 & 12 & 13 & 14 & 15 & 16 & 17 & 18 & 19 & 20 & 0 & 1 & 2 & 3 & 4 & 5 & 6  \\ \hline
  
  $a_{2}$ &$a_{3}$ & $a_{1}$ & $a_{2}$ & $a_{3}$ & $a_{1}$ & $a_{2}$ &$a_{3}$ & $a_{1}$ & $a_{2}$ & $a_{3}$ & $a_{1}$ & $a_{2}$ & $a_{3}$ & $a_{1}$ & $a_{2}$ & $a_{3}$ & $a_{1}$ & $a_{2}$ & $a_{3}$ &  $a_{1}$  \\ \hline

 $b_{1}$ & $b_{2}$ &$b_{3}$ & $b_{4}$ & $b_{5}$ & $b_{6}$ & $b_{7}$ &  $b_{1}$ & $b_{2}$ &$b_{3}$ & $b_{4}$ & $b_{5}$ & $b_{6}$ & $b_{7}$ & $b_{1}$ & $b_{2}$ &$b_{3}$ & $b_{4}$ & $b_{5}$ & $b_{6}$ & $b_{7}$  \\ \hline
 
 $s_{8}$ & $s_{9}$ &$s_{10}$ & $s_{11}$ & $s_{12}$ & $s_{13}$ & $s_{14}$ &  $s_{15}$ & $s_{16}$ &$s_{17}$ & $s_{18}$ & $s_{19}$ & $s_{20}$ & $s_{21}$ & $s_{1}$ & $s_{2}$ &$s_{3}$ & $s_{4}$ & $s_{5}$ & $s_{6}$ & $s_{7}$  \\ \hline
\end{tabular}

\label{tab:1-bit-a}
\end{center}
\end{table}

Comparison of Table~\ref{tab:cyclic-m} with Table~\ref{tab:1-bit-a} reveals that shifting one bit left of $a_t$ and fixing the $b_t$ to reference initial state of '1' shifts $s_t$ by seven units left. Similarly, shifting another bit of $a_t$ to left, brings $a_{3}$ corresponding to $b_{1}$ which can be located in Table~\ref{tab:cyclic-m} at shift position 14. So two left shifts of $a_t$ shifts $s_t$ by 14 units left with reference to bit positions in Table~\ref{tab:cyclic-m}. Now we analyze the impact of left shift of $b_t$ on $s_t$. Table~\ref{tab:1-bit-b} demonstrates the scenerio where $b_t$ is left shifted by one bit while keeping the $a_t$ fixed with initial state of '1'. 
\begin{table}[H]
\small
\caption[Sample Table]{Product sequence with $b_{t}$ shifted left}
\begin{center}
\begin{tabular}{|c| c| c| c| c| c| c| c| c| c| c| c| c| c| c| c| c| c| c| c| c| }
\hline
 15 & 16 & 17 & 18 & 19 &20 &0 & 1 &2 & 3 & 4 & 5 & 6 & 7 & 8 & 9 & 10 & 11 & 12 & 13 & 14 \\ \hline
   $a_{1}$ &  $a_{2}$& $a_{3}$ & $a_{1}$ & $a_{2}$ & $a_{3}$ & $a_{1}$ & $a_{2}$ &$a_{3}$ & $a_{1}$ & $a_{2}$ & $a_{3}$ & $a_{1}$ & $a_{2}$ & $a_{3}$ & $a_{1}$ & $a_{2}$ & $a_{3}$ & $a_{1}$ & $a_{2}$ & $a_{3}$   \\ \hline

  $b_{2}$ &$b_{3}$ & $b_{4}$ & $b_{5}$ & $b_{6}$ & $b_{7}$ &  $b_{1}$ & $b_{2}$ &$b_{3}$ & $b_{4}$ & $b_{5}$ & $b_{6}$ & $b_{7}$ & $b_{1}$ & $b_{2}$ &$b_{3}$ & $b_{4}$ & $b_{5}$ & $b_{6}$ & $b_{7}$ & $b_{1}$  \\ \hline
 
 $s_{16}$ & $s_{17}$ &$s_{18}$ & $s_{19}$ & $s_{20}$ & $s_{21}$ & $s_{1}$ &  $s_{2}$ & $s_{3}$ &$s_{4}$ & $s_{5}$ & $s_{6}$ & $s_{7}$ & $s_{8}$ & $s_{9}$ & $s_{10}$ &$s_{11}$ & $s_{12}$ & $s_{13}$ & $s_{14}$ & $s_{15}$  \\ \hline

\end{tabular}

\label{tab:1-bit-b}
\end{center}
\end{table}

It can be easily seen that one left shift in $b_t$ shifts  $s_t$ by 15 units where $b_{2}$ is corresponding to $a_{1}$. Similarly, another left shift in $b_t$ shifts $s_t$ by another 15 units bringing the $b_{3}$ corresponding to $a_{1}$. Subsequently, three left shifts in $b_t$ with reference to initial state of '1' brings $b_{4}$ corresponding to $a_{1}$ which is at shift index-3 in Table~\ref{tab:cyclic-m}. Similar fixed patterns can be observed for simultaneous shifts of LFSRs and it will be discussed with more detail in following paragraphs.

Let us model this fixed patterns in LFSRs cyclic structures and shifts in intial states of LFSRs through CRT as

	\begin{eqnarray*}
      x  &\equiv& k\mbox{\; (mod} \mbox{\;}n_1) \\
      x  &\equiv& l\mbox{\; (mod} \mbox{\;}n_2)   
               \end{eqnarray*}
	where $k$ and $l$ denote the amount of shifts in initial state of individual LFSRs with reference to initial state of '1'. The solution of CRT i.e. $x$(mod $r$) gives the amount of shift in $s_t$ with reference to $u_t$ as depicted in (\ref{eq:shift}). Consider a scenerio again where $a_t$ is shifted left by one bit and $b_t$ is fixed with initial state of '1' and can be expressed as 
	\begin{eqnarray*}
      x  &\equiv& 1\mbox{\; (mod} \mbox{\;}3) \\
      x  &\equiv& 0\mbox{\; (mod} \mbox{\;}7)   
               \end{eqnarray*}	
	The CRT gives the solution of 7(mod $21$) which is index position of $a_2$ corresponding to $b_1$ in Table~\ref{tab:cyclic-m} shifting the product sequence $s_t$ by seven units left. Consider another scenerio of simultaneous shifts in both LFSRs sequences where $a_t$ is shifted left by one bit and $b_t$ is shifted left by 3 bits with reference to their initial states of '1' and can be expressed as
	\begin{eqnarray*}
      x  &\equiv& 1 \mbox{\; (mod} \mbox{\;}3) \\
      x  &\equiv& 3 \mbox{\; (mod} \mbox{\;}7)   
               \end{eqnarray*}
	The CRT gives value of $-11$ which is $10$ (mod $21$), representing the product sequence $u_t$ as 10 units left shifted version of $s_t$. This value matches to index position of $b_{4}$ corersponding to $a_{2}$ in Table~\ref{tab:cyclic-m}.

Our Observations related to direct correspondence of shift index with initial states of LFSRs and CRT calculations done modulo periods of individual LFSRs are valid for any number of LFSRs in different configurations of nonlinear sequence generators. These observations on classical theory of LFSR cyclic structures with their CRT based interpretation are considered significant for cryptanalysis.  

 \end{example}

In addition to the results of Blahut's theorem on time and frequency domain relationship of sequences, an important corollary establishes new facts related fourier transform in binary fields.
\begin{corollary}
Let $s_t\in GF(2^m)$ be a product sequence with period $n \mid 2^m - 1$ having two constituent sequences $a_t \in GF(2^p)$ and $b_t \in GF(2^q)$ of LFSRs each defined over primitive polynomials with individual periods $n_1 = 2^p - 1$ and $n_2 = 2^q - 1$. If \textbf{$A$} be a DFT spectra of $a_t$, \textbf{$B$} be a DFT spectra of $b_t$ and \textbf{$S$} be a DFT spectra of $s_t$, non zero spectral components of  \textbf{$S$} will only exist at those indices where spectral components of \textbf{$A$} and \textbf{$B$} are non zero.
\end{corollary}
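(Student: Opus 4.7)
The plan is to convert the pointwise product $s_t = a_t b_t$ in the time domain into a cyclic convolution in the frequency domain, and then exploit the CRT isomorphism $\mathbb{Z}/n \cong \mathbb{Z}/n_1 \times \mathbb{Z}/n_2$ (valid since $\gcd(n_1,n_2)=1$ by Axiom~\ref{L-2}, because $\gcd(2^p-1,2^q-1)=2^{\gcd(p,q)}-1$) to show that the convolution collapses to a single term. First I would lift everything to the common period $n = n_1 n_2$. Extending $a_t$ and $b_t$ periodically, let $\tilde A_k = \sum_{t=0}^{n-1} a_t \alpha^{tk}$ and $\tilde B_k = \sum_{t=0}^{n-1} b_t \alpha^{tk}$ be their DFTs at length $n$ with respect to the primitive $n$-th root $\alpha$. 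Substituting the IDFT expansions of $a_t$ and $b_t$ into $s_t = a_t b_t$ and matching coefficients against the IDFT expansion of $s_t$ gives the standard convolution identity
\begin{equation*}
S_k \;=\; \sum_{i+j\equiv k\,(\mathrm{mod}\,n)} \tilde A_i\, \tilde B_j ,
\end{equation*}
valid without normalisation issues because in characteristic $2$ the factor $n = 2^m - 1$ is odd and hence invertible.

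Next I would pin down the support of $\tilde A$ and $\tilde B$. Splitting $\tilde A_k = \sum_{s=0}^{n_1-1} a_s \sum_{r=0}^{n_2-1} \alpha^{(s+rn_1)k}$ and evaluating the inner geometric series shows that $\tilde A_k$ vanishes unless $\alpha^{n_1k}=1$, i.e.\ unless $n_2\mid k$; a symmetric argument gives $\tilde B_k = 0$ unless $n_1\mid k$. Moreover, on the non-vanishing indices one has the clean re-indexing $\tilde A_{n_2 k'} = A_{k'}$ (computed against the primitive $n_1$-th root $\alpha^{n_2}$) and analogously $\tilde B_{n_1 k''} = B_{k''}$, so the original spectra $A,B$ sit inside $\tilde A,\tilde B$ with no loss of information.

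Finally I would invoke CRT to count the surviving terms. For each $k\in\mathbb{Z}/n$, identify $k$ with the pair $(k\bmod n_1,\,k\bmod n_2)$. The constraints $n_2\mid i$, $n_1\mid j$, and $i+j\equiv k$ force $i\leftrightarrow(k\bmod n_1,0)$ and $j\leftrightarrow(0,k\bmod n_2)$, a \emph{unique} pair. The convolution therefore collapses to a single product $S_k = \tilde A_{i(k)}\,\tilde B_{j(k)}$, so $S_k\neq 0$ iff both $\tilde A_{i(k)}\neq 0$ and $\tilde B_{j(k)}\neq 0$; in view of the re-indexing above, this is precisely the statement that the non-zero support of $S$ is carried by exactly those $k$ whose associated components of $A$ and $B$ are simultaneously non-zero.

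The main obstacle is purely bookkeeping across three DFT lengths $n_1$, $n_2$, and $n$: the phrase ``those indices where spectral components of $A$ and $B$ are non zero'' is unambiguous only once the CRT embedding $A\hookrightarrow\tilde A$, $B\hookrightarrow\tilde B$ and the CRT bijection $\mathbb{Z}/n\leftrightarrow\mathbb{Z}/n_1\times\mathbb{Z}/n_2$ are in place. Once these identifications are stated cleanly, the convolution-plus-CRT argument above is essentially algebraic manipulation.
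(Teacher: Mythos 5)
Your argument is correct in substance, and it is genuinely different from what the paper offers: the paper never proves this corollary at all. It is stated bare, supported only by the numerical check in Example~\ref{exm-2}, and is in fact \emph{cited as an ingredient} inside the proof of Theorem~\ref{CRT-Mtheorem}; the closest the paper comes to a justification is the multiplicative root identity $\gamma^{d}=\alpha^{d\bmod n_1}\beta^{d\bmod n_2}$ used there, which presupposes a particular compatible choice of roots rather than deriving the support statement. Your route --- pointwise product in time $=$ cyclic convolution at the common length $n=n_1n_2$, plus the observation that the length-$n$ spectrum of a period-$n_1$ sequence is supported on multiples of $n_2$ (and symmetrically), plus the CRT collapse of the convolution to a single term --- actually proves more than the corollary asks: it yields the exact one-term formula $S_k=\tilde A_{i(k)}\tilde B_{j(k)}$, which simultaneously gives this corollary, the index correspondence of Corollary~\ref{cor-index}, and essentially the value formula of Theorem~\ref{CRT-Mtheorem}, while making visible exactly where $\gcd(n_1,n_2)=1$ is needed. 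The one wrinkle is the re-indexing root: with your choice $\tilde A_{n_2k'}=A_{k'}$ taken against $\alpha^{n_2}$, the final correspondence reads $S_k\neq 0$ iff $A_{n_2^{-1}k\bmod n_1}\neq 0$ and $B_{n_1^{-1}k\bmod n_2}\neq 0$, i.e.\ it carries unit multipliers; to land literally on the paper's $A_{k\bmod n_1}$, $B_{k\bmod n_2}$ you should compute the small DFTs against the CRT-idempotent powers $\alpha^{e_1},\alpha^{e_2}$ with $e_1\equiv(1,0)$ and $e_2\equiv(0,1)$ under $\mathbb{Z}/n\cong\mathbb{Z}/n_1\times\mathbb{Z}/n_2$. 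Since a unit multiplier permutes cyclotomic cosets and can genuinely move the support set when the roots of the constituent DFTs are chosen independently of $\alpha$, this normalisation is worth stating explicitly rather than leaving as ``bookkeeping''; you flag the issue yourself, so it is a presentational fix, not a gap.
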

we have another associated corollary here:-
\begin{corollary}\label{cor-index}
With known non zero spectral components of \textbf{$A$} and \textbf{$B$},  non zero spectral components of \textbf{$S$} can be directly determined through Chinese Remainder Theorem (CRT) as:
   
   \begin{eqnarray*}
      x  &\equiv& k_1\mbox{\; (mod} \mbox{\;}n_1) \\
      x  &\equiv& k_2\mbox{\; (mod} \mbox{\;}n_2)   
               \end{eqnarray*}
   where $k_1$ and $k_2$ are non zero index positions of $A_k$ and $B_k$ respectively and $x$ is the position of non zero componenet of DFT spectra of $s_t$ within its period $n$.

\end{corollary}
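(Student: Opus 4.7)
The plan is to combine the convolution theorem for DFTs over $GF(2^m)$ with a periodic-extension argument, and then read off the index identification directly from the Chinese Remainder Theorem. Since $\gcd(n_1, n_2) = 1$ and $n = n_1 n_2$, one can extend $a_t$ and $b_t$ to the common period $n$ via $\tilde{a}_t = a_{t \bmod n_1}$ and $\tilde{b}_t = b_{t \bmod n_2}$, turning $s_t = \tilde{a}_t \tilde{b}_t$ into a pointwise identity between length-$n$ sequences. This is the natural setting for applying the DFT with the primitive $n$-th root of unity $\alpha \in GF(2^m)$.

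First I would compute the length-$n$ DFT of the extended sequence by splitting $t = j n_1 + i$ with $0 \le i < n_1$ and $0 \le j < n_2$. The double sum factorizes as
\[
\tilde{A}_k \;=\; \sum_{i=0}^{n_1 - 1} a_i\, \alpha^{ik} \cdot \sum_{j=0}^{n_2 - 1} \bigl(\alpha^{n_1 k}\bigr)^{j},
\]
and the standard geometric-series identity forces the inner sum to vanish unless $n_2 \mid k$; when $n_2 \mid k$, the inner sum equals $n_2$, which is $1$ in characteristic two. Hence the support of $\tilde{A}$ is confined to the arithmetic progression $\{0, n_2, 2n_2, \ldots, (n_1 - 1) n_2\}$, and on this progression it matches the native DFT of $a$ taken with respect to the subfield primitive element $\alpha^{n_2}$ of order $n_1$. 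A symmetric argument restricts the support of $\tilde{B}$ to multiples of $n_1$.

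Second I would invoke the time-multiplication to frequency-convolution identity: expanding $s_t = \tilde{a}_t \tilde{b}_t$ using the inverse DFTs of $\tilde{a}$ and $\tilde{b}$ and swapping sums (with the orthogonality relation $\sum_t \alpha^{t(k-i-j)} = 0$ unless $k \equiv i + j \pmod{n}$) gives $S_x = \sum_{u + v \equiv x \bmod n} \tilde{A}_u \tilde{B}_v$. The support restrictions from the previous step force every surviving term to have $u = j n_2$ with $A_j \ne 0$ and $v = i n_1$ with $B_i \ne 0$, so a non-zero $S_x$ requires $x \equiv j n_2 + i n_1 \pmod{n}$. Since $\gcd(n_1, n_2) = 1$, this single congruence is equivalent, via the CRT bijection, to the simultaneous system $x \equiv k_1 \pmod{n_1}$ and $x \equiv k_2 \pmod{n_2}$ asserted in the statement, with $k_1, k_2$ determined by the non-zero indices of $A$ and $B$.

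The main delicacy I anticipate is not in the CRT step itself but in the bookkeeping between the three DFT lengths $n_1$, $n_2$, $n$ and their corresponding primitive roots, all of which must be located inside $GF(2^m)$. One has to match the primitive elements used for the native DFTs of $a$ and $b$ with the subfield generators $\alpha^{n_2}$ and $\alpha^{n_1}$, so that a non-zero index $k_1$ of $A$ corresponds unambiguously to the non-zero coordinate of $\tilde{A}$ at position $k_1 n_2$ modulo $n$. Once that identification is fixed, and one notes that $n, n_1, n_2$ are all odd and thus invertible in $GF(2^m)$, the CRT claim is a direct consequence of the convolution support.
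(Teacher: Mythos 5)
Your proof is correct in substance and takes a genuinely different route from the paper. The paper never actually proves this corollary: it is asserted on the back of the time-domain ``fixed pattern'' analysis of Theorem~1 (each index of the product sequence pairs a unique residue mod $n_1$ with a unique residue mod $n_2$) and is then verified on Example~2; the later proof of Theorem~2 even cites this corollary as an ingredient. Your argument is instead a frequency-domain derivation from first principles: periodically extend $a$ and $b$ to length $n=n_1n_2$, show the extended spectra are supported on the multiples of $n_2$ and $n_1$ respectively (using that $n_1,n_2$ are odd, so the geometric sums collapse to $0$ or $1$ in characteristic two), and apply the multiplication--convolution identity. This buys strictly more than the paper states: for each $x$ the sum $\sum_{u+v\equiv x}\tilde A_u\tilde B_v$ contains \emph{exactly one} term, since $u\equiv 0\ (\mathrm{mod}\ n_2)$, $v\equiv 0\ (\mathrm{mod}\ n_1)$ and $u+v\equiv x\ (\mathrm{mod}\ n)$ determine $(u,v)$ uniquely by CRT; hence there is no cancellation, you get the full ``if and only if'' rather than only a necessary condition, and you get the value $S_x=\tilde A_u\tilde B_v$ for free, which is essentially the content of the paper's Theorem~2. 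Do make the single-term observation explicit, since the corollary claims the components \emph{are} non-zero at the CRT indices, not merely that they cannot occur elsewhere.

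One detail to pin down: with your identification of the support of $\tilde A$ at positions $k_1n_2$, reducing $x\equiv k_1n_2+k_2n_1\ (\mathrm{mod}\ n)$ modulo $n_1$ gives $x\equiv k_1n_2\ (\mathrm{mod}\ n_1)$, not $x\equiv k_1\ (\mathrm{mod}\ n_1)$. To land on the congruences exactly as stated you must take the native DFT of $a$ with respect to $\alpha^{e_1}$, where $e_1\equiv 1\ (\mathrm{mod}\ n_1)$ and $e_1\equiv 0\ (\mathrm{mod}\ n_2)$ is the CRT idempotent (and symmetrically for $b$); then the coordinate of $\tilde A$ attached to $k_1$ sits at $k_1e_1\bmod n$ and the reduction gives $x\equiv k_1\ (\mathrm{mod}\ n_1)$ on the nose. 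You correctly flag this normalization as the main delicacy, but the exponent you chose ($k_1n_2$ rather than $k_1e_1$) introduces the harmless twist $x\mapsto xn_2^{-1}\bmod n_1$ relative to the paper's indexing; in the paper's own example $n_2=7\equiv 1\ (\mathrm{mod}\ 3)$, so the twist happens to be invisible on the first LFSR but would not be on the second.
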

It is important to observe here that indices of non zero spectral components present in a complete spectrum of resultant stream are determined while working in base fields of component LFSRs and without computing DFT of $s_t$ in a larger field. 
Let we explain these corollaries through a small example here.
\begin{example}\label{exm-2}
Following the Example~\ref{E-1}, consider a product sequence $s_t$ generated from two LFSRs with minimum polynomials  $g_{1}(x) = x^3 + x +1$ and  $g_{2}(x) = x^2 +x +1$.

\begin{enumerate}
\item In time domain representation, we have following sequences.\\
\;\;\;\;\;\;\;\;Sequence $a_t$:\;\;\;$011$\;\;\;\;\;\;\;\;\;\;\;\;\;\;\;\;\;\;\;\;\;\;\;\;\;\;\;\;\;\;\;\;\;\;\;\;\;\;\;\;\;\;\;\;\;\;\;\;\;\;\;\;\;\;\;\;\;\;\;\;\;\;\;\;\;\;\;\;\;\;\;\;\;(of period 3)\\
\;\;\;\;\;\;\;\;	Sequence $b_t$:\;\;\;$0010111$\;\;\;\;\;\;\;\;\;\;\;\;\;\;\;\;\;\;\;\;\;\;\;\;\;\;\;\;\;\;\;\;\;\;\;\;\;\; \;\;\;\;\;\;\;\;\;\;\;\;\;\;\;\;\;\;\;\;\;\;\;\;\;\;(of period 7)\\
\;\;\;\;\;\;\;\;	Sequence $s_t$:\;\;\;$001011000001010010011101110111011101$\;\;\;\;\;\;\;\;\;\;\;\;\;\;(of period 21)\\
\item From~(\ref{DFT eq}), frequency domain representations of these sequences are:
\begin{enumerate}
\item \textbf{$A$} $ = 0,1,1$
\item \textbf{$B$}  $= 0,0,0,\alpha^4, 0,\alpha^2,\alpha$
\item To compute \textbf{$S$}, associated minimum polynomial is determined through Berlekamp-Massey algorithm which is 
$g(x) = x^6+x^4+x^2+x+1$.\\
\textbf{$S$} $= 0,0,0,0,0,\alpha^9,0,0,0,0, \alpha^{18},0,0,\alpha^{15},0,0,0,\alpha^{18},0,\alpha^9,\alpha^{15}$
\end{enumerate}
\end{enumerate} 
\end{example}

Non-zero DFT points in \textbf{$S$} clearly follow a linear behaviour as of time domain representation where any $k$-th component is non-zero if and only if $A_{k}$ and $B_{k}$ are both non-zero. Through non zero indices of \textbf{$A$} and \textbf{$B$}, CRT can be directly used to determine  non-zero spectral points of \textbf{$S$}. For instance,
  \begin{eqnarray*}
      x  &\equiv& 1\mbox{\; (mod} \mbox{\;}3) \\
      x  &\equiv& 3\mbox{\; (mod} \mbox{\;}7)   
               \end{eqnarray*}
               results into index 10 where $\alpha^{18}$ is a non zero spectral component of \textbf{$S$}.
These results on determining non zero spectral indices for product of two sequences are valid for product sequences containing more number of LFSRs as well. 
               
Harmonic pattern of DFT spectra are visible for \textbf{$A$}, \textbf{$B$} and \textbf{$S$}. 
Non-zero indices of DFT sequences also follow a fixed pattern. In case of \textbf{$S$}, non zero DFT element at index 5 has its harmonics at indices $10,\; 20,\; 19 \;(40\; \mbox{mod\;} 21), \;17\;(80\; \mbox{mod\;} 21)$ and at $13\;(160\; \mbox{mod\;} 21)$. The zero components in the fourier transform of a product sequence $s_t$ defined over $GF(2^m)$ are related to roots of $g_(x) = x^6+x^4+x^2+x+1$. As roots of $g(x)$ are $\alpha$ alongwith its conjugates i.e. $\alpha^2$, $\alpha^4$, $\alpha^8$ and $\alpha^{16}$ so first, second, fourth, eigth and sixteenth spectral components are zero.  

\section{Computing the Spectral Components in $GF(2^m)$ through CRT }
Computing DFT of a sequence \textbf{s}$\in GF(2^m)$ by equation~(\ref{DFT eq}) over binary fields requires determining the associated minimum polynomial $m(x)$ of \textbf{$s$}. The most efficient method which computes the linear complexity $l$ of a periodic sequence \textbf{$s$} and gives its minimum polynomial is berlekamp massey algorithm~\cite{golomb2005signal}. The algorithm further requires $2l$ bits of the sequence to determine the linear complexity and minimum polynomial $m(x)$. Based on the root of minimum polynomial $m(x)$, equation~\ref{DFT eq} requires complete period of the sequence to compute each spectral componenet of \textbf{$S$}. Faster method to compute DFT in binary fields proposed in~\cite{gong2011fast} requires lesser number of bits equal to linear complexity $l$ or in few cases lesser than that. However, in all these cases computations have to be in $GF(2^m)$ to which sequence \textbf{$s$} belongs. In this section, new method has been introduced which allows mapping of spectral components of smaller constituent fields to larger finite fields with few limitations of choice of particular indices. We will develop our idea progressively from product of sequences in time domain to a genarlized case of boolean functions where addition of bits in $GF(2)$ is  encompassed as well.
\subsection{Product of Arbitrary Number of $m$-Sequences} In this subsection, case of product sequence is considered where any arbitrary number of LFSR sequences are multiplied togather. Starting with simple case of two LFSRs, we will establish facts for more number of LFSRs where direct computation of spectral points for product sequence is done from DFT points of individual LFSR sequences. We have an important theorem here.
\begin{theorem}
\label{CRT-Mtheorem}
Let $s_t\in GF(2^m)$ be a product sequence with period $n  \mid 2^m - 1$ having $r$ constituent sequences $a_i \in GF(2^{p_i})$ of LFSRs each defined over primitive polynomials with individual periods $n_i = (2^{p_{i}} - 1)$, where all $n_i$ are coprime to each other and $0 \leq i \leq r-1$. Let \textbf{$A^i$} be a DFT spectra of $a_i$, a $k$-th spectral component of \textbf{$S$} corresponding to each non-zero spectral components of \textbf{$A^{i}_{(k\mbox{\;} mod \mbox{\;} n_i)}$} can be determined directly through CRT as

      \begin{eqnarray*}
      d  &\equiv & d_1\mbox{\; (mod} \mbox{\;}n_1) \\
      d  &\equiv & d_2\mbox{\; (mod} \mbox{\;}n_2) \\
      \mbox{..} & \mbox{..} & \mbox{................}\\
      \mbox{..} & \mbox{..} & \mbox{................}\\
       d  &\equiv & d_r\mbox{\; (mod} \mbox{\;}n_r) 
               \end{eqnarray*}
      
   where $d$, $d_1$, $d_2$,..., $d_r$  are degrees of non-zero spectral components i.e. \textbf{$S_k$}, \textbf{$A^{1}_{(k \mbox{\;} mod \mbox{\;} n_1)}$},..., \textbf{$A^{r}_{(k \mbox{\;} mod \mbox{\;} n_r)}$} respresented in terms of associated roots $\gamma \in GF(2^m)$,  $\alpha_1 \in GF(2^{p_{1}})$, $\alpha_2 \in GF(2^{p_{2}})$, .... and $\alpha_r \in GF(2^{p_{r}})$ of    
minimal polynomials of \textbf{s}, \textbf{a}$_{1}$, \textbf{a}$_{2}$, .... and \textbf{a}$_{r}$  respectively.
\end{theorem}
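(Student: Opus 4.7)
The plan is to derive the DFT of $s_t$ directly from the inverse DFTs of the constituents and then extract the CRT structure on the degrees (discrete logarithms). Let $n = \prod_i n_i$, which is the period of $s_t$ by pairwise coprimality of the $n_i$. Fix $\gamma \in GF(2^m)$ a primitive $n$-th root of unity associated with the minimal polynomial of $s$. Each $\alpha_i$, being a primitive $n_i$-th root in $GF(2^{p_i}) \subset GF(2^m)$, must lie in the unique order-$n_i$ subgroup $\langle \gamma^{n/n_i} \rangle$, so $\alpha_i = \gamma^{c_i}$ for some $c_i$ that is a multiple of $n/n_i$ and a unit modulo $n_i$; in particular $c_i \equiv 0 \pmod{n_l}$ for every $l \neq i$.

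Next, using the IDFT relation~(\ref{IDFT eq}) for each constituent, $a_i(t) = \sum_{j_i=0}^{n_i-1} A^i_{j_i}\, \gamma^{-t\, c_i j_i}$, and forming the product and interchanging the sum and product yields
\begin{equation*}
s_t \;=\; \prod_{i=1}^{r} a_i(t) \;=\; \sum_{(j_1,\dots,j_r)} \Bigl(\prod_i A^i_{j_i}\Bigr)\, \gamma^{-t\sum_i c_i j_i}.
\end{equation*}
By CRT the map $\vec{j}\mapsto \sum_i c_i j_i \bmod n$ is a bijection $\prod_i \mathbb{Z}/n_i \to \mathbb{Z}/n$, since its reduction modulo each $n_l$ is multiplication by the unit $c_l$ on $\mathbb{Z}/n_l$. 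Reindexing by $k = \sum_i c_i j_i \bmod n$ identifies this expansion with the IDFT of $s_t$, giving $S_k = \prod_i A^i_{j_i(k)}$ where $\vec{j}(k)$ is the unique CRT preimage of $k$. As a by-product, $S_k \neq 0$ iff every $A^i_{j_i(k)} \neq 0$, which recovers the support corollary of the previous section.

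Finally, I will translate to degrees: writing $S_k = \gamma^d$ and $A^i_{j_i(k)} = \alpha_i^{d_i}$, the product identity yields $d \equiv \sum_i c_i d_i \pmod{n}$, which reduces modulo $n_l$ to $d \equiv c_l d_l \pmod{n_l}$ since every other term is divisible by $n_l$. To match the clean form $d \equiv d_l \pmod{n_l}$ of the theorem, with $d_l$ the degree of $A^l_{k \bmod n_l}$, I will invoke the Frobenius equivariance $A^l_{2j} = (A^l_j)^2$ of the DFT of a $GF(2)$-valued sequence: when the primitives are chosen compatibly, the two indices $j_l(k)$ and $k \bmod n_l$ differ by a power of two modulo $n_l$ (namely by the factor $c_l$ itself), and the corresponding degrees differ by that same multiplicative factor, exactly absorbing the $c_l$ prefactor. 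The main obstacle will be this reconciliation, which requires exhibiting compatible choices of primitives—a specific root of the minimal polynomial of $s$ together with specific Frobenius conjugates of the $\alpha_i$—so that $c_l \bmod n_l$ lies in the cyclic subgroup of $(\mathbb{Z}/n_l)^*$ generated by $2$. Once that normalization is in place, the convolution-via-CRT identity for $S_k$ immediately gives the stated CRT system for the degrees.
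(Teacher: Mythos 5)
Your route is genuinely different from the paper's, and considerably more rigorous. The paper's proof simply asserts the root relation $\gamma^{d} = \alpha^{d}\cdot\beta^{d}$ as a ``transformation'' of the time-domain identity $s_t = a_t b_t$, reduces the exponents modulo the orders $n_1, n_2$, and reads off the congruences, declaring the $r$-fold case ``trivial''; it never actually establishes that the spectral components themselves multiply, nor that $\gamma$ equals the product of the constituent roots. You instead derive the multiplicativity of the spectra from first principles: expanding each $a_i(t)$ by its inverse DFT, taking the product, and observing that $\vec{j}\mapsto \sum_i c_i j_i \bmod n$ is a CRT bijection, so that the resulting expansion is itself the (unique) inverse-DFT representation of $s_t$ and hence $S_k = \prod_i A^i_{j_i(k)}$. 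This buys you both the support corollary and the degree congruences as corollaries of one identity, and it also surfaces the normalization issue that the paper silently ignores: the theorem's indices $k \bmod n_l$ and clean congruences $d\equiv d_l\ (\mathrm{mod}\ n_l)$ only come out if the roots are chosen compatibly. The one step you leave open --- that $c_l \bmod n_l$ lies in $\langle 2\rangle \subseteq (\mathbb{Z}/n_l\mathbb{Z})^*$ --- does in fact always hold and can be closed as follows: expanding $s_t = \prod_i Tr_1^{p_i}(\beta_i\alpha_i^t)$ shows that every root of the minimal polynomial of $s$ has the form $\prod_i \alpha_i^{2^{e_i}}$, so after replacing each $\alpha_i$ by its Frobenius conjugate $\alpha_i^{2^{e_i}}$ one may take $\gamma = \prod_i \alpha_i$, which forces $c_i \equiv 1\ (\mathrm{mod}\ n_i)$ and makes your reconciliation exact (this also recovers, and justifies, the paper's unproved claim $\gamma = \alpha\beta$). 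With that addition your argument is complete and strictly stronger than the paper's.
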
 
\begin{proof}

To prove the theorem for a generalized case of $r$ LFSRs multiplied togather, let we consider first a simple case of product of two LFSRs only.

Let $s_t\in GF(2^m)$ be a product sequence with period $n  \mid 2^m - 1$ having two constituent sequences $a_t \in GF(2^p)$ and $b_t \in GF(2^q)$ of LFSRs each defined over primitive polynomials with individual periods $n_1 = (2^p - 1)$ and $n_2 = (2^q - 1)$, where $n_1$ and $n_2$ are coprime to each other. Let \textbf{$A$} be a DFT spectra of $a_t$,  \textbf{$B$} be a DFT spectra of $b_t$ and \textbf{$S$} be a DFT spectra of $s_t$. 

Let $d$, $d_1$ and $d_2$ are degrees of non-zero spectral components i.e. \textbf{$S_k$}, \textbf{$A_{(k \mbox{\;} mod \mbox{\;} n_1)}$} and \textbf{$B_{(k\mbox{\;} mod \mbox{\;}n_2)}$} respresented in terms of associated roots $\gamma \in GF(2^m)$, $\alpha \in GF(2^p)$ and $\beta \in GF(2^q)$ of    
minimal polynomials of $s_t$, $a_t$ and $b_t$ respectively.
All roots of  minimum polynomials of \textbf{$a$}, \textbf{$b$} and \textbf{$s$} lie within their respective fields i.e. $\alpha \in GF(2^p)$, $\beta \in GF(2^q)$ and $\gamma \in GF(2^m)$ respectively.
$n_1$ and $n_2$ being coprime, $n = lcm (n_1, n_2)$.
By corollary~\ref{cor-index}, spectral components of \textbf{$S$} are non zero at all indices where corresponding spectral components of \textbf{$A$} and \textbf{$B$} are non zero. As all DFT spectral components of  \textbf{$S$} lie within $GF(2^m)$ and correspond to $\gamma^h$, where $0 \leq h \leq m-1$. Let we consider any $k$-th component of spectra of \textbf{$S$} corresponding to non zero DFT components of \textbf{$A$} and \textbf{$B$}, where we only need to prove that both non zero spectral components of \textbf{$A$} and \textbf{$B$} has one to one mapping to \textbf{$S$} through CRT. 
\\Transforming the relationship of \textbf{$s_t$}= \textbf{$a_t$}.\textbf{$b_t$}  into roots of associated polynomials of each sequence in their respective binary fields by using definitions of  $GF(2^m)$ by $\gamma^h$ ($0 \leq h \leq n$), $GF(2^p)$ by $\alpha^i$ ($0\leq i \leq n_1$) and $GF(2^q)$ by $\beta^j$ ($0 \leq j \leq n_2$), we have
\begin{equation}\label{gamma}
  \gamma^{d} = \alpha^{d}.\beta^{d} , \;\;\;  d = 0,1,2,.....,n-1
\end{equation}
As we can write \textbf{$s_t$}= \textbf{$a_{(t\; mod\; n_1)}$}.\textbf{$b_{(t \;mod\; n_2)}$}, equation~(\ref{gamma}) can be expressed as

\begin{equation}\label{gamma-mod}
  \gamma^{d} = \alpha^{d\;mod\; n_1}.\;\beta^{d\; mod\; n_2} , \;\;\;  t = 0,1,2,.....,n-1
\end{equation}
From equation~(\ref{gamma-mod}), there exists a unique mapping for $\gamma ^ d$, $\alpha^{d_1}$, and $\beta^{d_2}$ which can be computed using CRT as
\begin{eqnarray*}
      d  &\equiv& d_1\mbox{\; (mod} \mbox{\;}n_1) \\
      d  &\equiv& d_2\mbox{\; (mod} \mbox{\;}n_2)   
               \end{eqnarray*}

Mapping these facts on a product sequence having $r$ constituent sequences, it becomes trivial to see
  \begin{eqnarray*}
      d  &\equiv & d_1\mbox{\; (mod} \mbox{\;}n_1) \\
      d  &\equiv & d_2\mbox{\; (mod} \mbox{\;}n_2) \\
      \mbox{..} & \mbox{..} & \mbox{................}\\
      \mbox{..} & \mbox{..} & \mbox{................}\\
       d  &\equiv & d_r\mbox{\; (mod} \mbox{\;}n_r) 
               \end{eqnarray*}
               \end{proof}

As $GF(2^m)$ considered here is implicitly constituted by product of elements of $ GF(2^p)$ and $GF(2^q)$, convolution of $\alpha^i \in GF(2^p)$, $\beta^j \in GF(2^q)$ should result into spectral component $\gamma^h \in GF(2^m)$ ideally at each index $k$ . For convolutions in finite fields, readers may refer to~\cite{reed1975use}. However, when elements belong to different binary fields,  not much is known to us . Nevertheless, CRT based method of computing DFT components in higher binary fields from constituent DFT components in lower order fields is considered novel in this regard. Let us illustrate our results through an example.

\begin{example}\label{exm-3}
Consider a product sequence \texttt{s} having three LFSRs with primitive polynomials as $g_{1}(x) = x^{2}+x+1,\; g_{2}(x) = x^{3}+x+1  $ and $g_{3}(x) = x^{5}+x^{2}+1$. The outputs of LFSRs in this case are m-sequences, denoted as \textbf{a}$^{1}$, \textbf{a}$^{2}$ and \textbf{a}$^{3}$ respectively. Product stream \textbf{s} is obtained as 
  	\begin{equation}
  	 s_t = a^{1}_{t}.a^{2}_{t} .a^{3}_{t} \mbox{\;\;\;where\;\;} 0 \leq t \leq n-1
  	 \end{equation}
  	where period $n$ of \textbf{s}$_{t}$ in this case becomes 651 as $\mbox{lcm}(3,7,31)= 651$.
 DFT components of \textbf{a}, \textbf{b} ,  and \textbf{c} with primitive elements $ \alpha\in GF(2^2)$, $ \beta \in GF(2^3)$ and $ \delta\in GF(2^5)$ respectively are

\begin{itemize}

\item \textbf{A}$^{1}$ $ = \{0,1,1\}$
\item \textbf{A}$^{2}$  $= \{0,0,0,\beta^4, 0,\beta^2,\beta \}$
\item \textbf{A}$^{3}$ $= \{0,0,0,0,0,0,0,0,0,0,0,0,0,0,0,\delta^{29},0,0,0,0,0,0,0,\delta^{30},0,0,0,\delta^{15},0,$\\$\mbox{\;\;\;\;\;\;\;\;\;\;}\delta^{23},\delta^{27}\}$
\end{itemize}
To compute DFT of \textbf{s}, we need to compute its associated minimum polynomial through berlekamp massey algorithm which in this case is  $m(x)$=
$x^{30} + x^{25} + x^{24} + x^{20} + x^{19} + x^{17} + x^{16} + x^{13} + x^{10} + x^9 + x^8 + x^7 + x^4 + x^2 + 1$ with generator $\gamma$ $\in GF(2^{30})$.
\\ Having a complete period (651 bits) of \textbf{s}, we compute DFT through equation~\ref{DFT eq}. Corresponding to degree of minimum polynomial,  we get thirty non-zero DFT components at indices shown in Table~\ref{tab:s-gamma} below.

\begin{table}[H]
\small
\begin{center}
\caption[Sample Table]{Non Zero Spectral Points of \textbf{S}}
\begin{tabular}{|c| c| c| c| c| c| c| c| c| c| c|  }
\hline
 Index & 61 & 89 & 122 & 139 & 178 & 185 & 209 & 215 & 244 & 271   \\ \hline
  
Spectral Component & $\gamma^{492}$ & $\gamma^{387}$ & $\gamma^{333}$ &$ \gamma^{246}$ & $\gamma^{123}$ & $\gamma^{585}$ & $\gamma^{309}$ & $\gamma^{240}$ & $\gamma^{15}$ & $\gamma^{30}$   \\ \hline \hline 

Index & 278 & 325 & 356 & 370 & 395 & 418 & 430 & 433 & 461 & 488   \\ \hline
  
Spectral Component & $\gamma^{492}$ & $ \gamma^{60}$ & $\gamma^{246}$ &  $\gamma^{519}$ & $\gamma^{123}$ & $\gamma^{618}$ & $\gamma^{480}$ & $\gamma^{120}$ & $\gamma^{15}$ & $ \gamma^{30} $  \\ \hline \hline  

Index & 523 & 542 & 556 & 587 & 619 & 635 & 643 & 647 & 649 & 650  \\ \hline
  
Spectral Component & $\gamma^{387}$ & $\gamma^{60}$ & $\gamma^{333}$ & $\gamma^{519}$ &  $\gamma^{585}$ & $\gamma^{618}$ & $\gamma^{309}$  & $\gamma^{480}$ & $\gamma^{240}$ & $\gamma^{120}$  \\ \hline

\end{tabular}

\label{tab:s-gamma}
\end{center}

\end{table}

From corollary~\ref{cor-index}, non zero indices of \textbf{S} can be determined directly from knowing the individual DFTs of three LFSRs separately. For instance,
 \begin{eqnarray*}
      x  &\equiv& 1\mbox{\; (mod} \mbox{\;}3) \\
      x  &\equiv& 3\mbox{\; (mod} \mbox{\;}7)\\
      x  &\equiv& 15\mbox{\; (mod} \mbox{\;}31)
               \end{eqnarray*}
               gives result of 325 which exists amongst thirty non-zero DFT computations as well. Similarly with known spectral points of \textbf{A}$^{1}_{1}$ = $\alpha ^0$, \textbf{A}$^{2}_{3}$ = $\beta ^4$ and \textbf{A}$^{3}_{15}$ = $\delta^{29}$, spectral component \textbf{S}$_{325}$ can be determined directly by theorem~\ref{CRT-Mtheorem} as

\begin{eqnarray*}
      d  &\equiv& 0\mbox{\; (mod} \mbox{\;}3) \\
      d  &\equiv& 4\mbox{\; (mod} \mbox{\;}7)\\
      d  &\equiv& 29\mbox{\; (mod} \mbox{\;}31)
               \end{eqnarray*}

CRT gives the result of 60. So the spectral componenet \textbf{S}$_{325}$ $\in GF(2^{30})$ becomes $\gamma^{60}$. Similarly all non zero points of \textbf{S} $\in GF(2^{30})$ can be computed directly by theorem~\ref{CRT-Mtheorem} without the requirement of minimum polynomial $m(x)$, $n$ number of bits of \textbf{s} and classical computations of DFT by equation~\ref{DFT eq}.
 Conversely, from known DFT spectra of \textbf{S} only, individual DFT spectral points of  \textbf{A}$^1$, \textbf{A}$^2$ and \textbf{A}$^3$ can also be computed. For instance, having known $\gamma^{492}$ at \mbox{$S_{61}$}, \textbf{A}$^{1}_{1}$ is directly computed as $\alpha^0$, \textbf{A}$^{2}_{5}$ is computed as $\beta^2$ and \textbf{A}$^{3}_{30}$ is  computed as $\delta^{27}$. These results are considered very useful in cryptanalysis of LFSR based sequences.
\end{example}
 
\subsection{Generic Combinatorial Sequences}
Having considered the product sequences of multiple LFSRs, generic case of combinatorial sequences is discussed now where outputs of multiple LFSRs are combined through a non linear function involving multiplication and addition of bits in $ GF(2)$. From the established fact of theorem~\ref{CRT-Mtheorem} for product sequenecs, we now generalize the case for combinatorial generators here.

Consider a combinatorial generator consisting of $r$ constituent LFSRs. Let $z_t\in GF(2^m)$ be the output sequence of generator with period $n  \mid 2^m - 1$ and $m(x)$ be the associated minimum polynomial. Let $\gamma \in GF(2^m)$ be the root of $m(x)$, $\alpha_1 \in GF(2^{p_{1}})$, $\alpha_2 \in GF(2^{p_{2}})$, .... and $\alpha_r \in GF(2^{p_{r}})$ of    
minimal polynomials of \textbf{z}, \textbf{a}$_{1}$, \textbf{a}$_{2}$, .... and \textbf{a}$_{r}$  respectively.   The nonlinear function $f(x_1,x_2,...,x_{r-1})$ combines outputs of \textbf{r} LFSRs and produces the resultant stream \textbf{z} as
\begin{equation}
	z_t = f(a^{1}_{t}, a^{2}_{t},...a^{r}_{t}) \mbox{\;\;\;where\;\;} 0 \leq t \leq n-1
\end{equation}
As $f(x)$ is not only a product function, we have 
\begin{equation}
m(x) \neq m_{1}(x).m_{2}(x)....m_{r}(x) \\
\end{equation}
\begin{equation}
\Rightarrow \gamma \neq \alpha^{1}.\alpha^{2}....\alpha^{r}
\end{equation}

To take DFT of \textbf{z} by Equation~\ref{DFT eq}, we require $n$ bits of \textbf{z} and DFT will be computed with respect to $\gamma \in GF(2^m)$ having order $n$. Non zero DFT terms termed as linear span of \textbf{z} will be equilavalent to degree of associated minimum polynomial $m(x)$. These results are consistent to known theory of DFT in binary fields. However, few additional results are noted which are correlated to CRT based fixed patterns of sequences. 

If we take DFT of \textbf{z} with respect to generator $\sigma$ of its minimum polynomial $m(x)$,  experimental results reveal that irrespective of combining function $f(x)$, a fixed relationship between frequency components of \textbf{Z} and individual spectral components of \textbf{A}$^{1}$,\textbf{A}$^{2}$,....,\textbf{A}$^{r}$ exists at all those indices of \textbf{Z} where corresponding spectral components of \textbf{A}$^{1}$,\textbf{A}$^{2}$,....,\textbf{A}$^{r}$ are all non zero. Let we represent $d$, $d_1$, $d_2$,..., $d_r$  as degrees of non-zero spectral components of \textbf{$Z_k$}, \textbf{$A^{1}_{(k \mbox{\;} mod \mbox{\;} n_1)}$},..., \textbf{$A^{r}_{(k \mbox{\;} mod \mbox{\;} n_r)}$} respresented in terms of associated roots of respective minimum polynomials.  At any index $k$, where all corresponding spectral components of LFSR sequenecs are non zero, \textbf{Z}$_k \in GF(2^m)$ can be directly determined using through CRT as described in theorem~\ref{CRT-Mtheorem}. Similarly, from corollary~\ref{cor-index} and ~\ref{cor-CRT}, non zero indices of \textbf{S} corresponding to non zero spectral components of \textbf{A}$^{1}$,\textbf{A}$^{2}$,....,\textbf{A}$^{r}$ are directly determined. Let we validate our observations through an example of a simple combiner generator.

\begin{example}\label{ex-combiner}
With the same assumptions of Example~\ref{exm-3} with three LFSRs and notations used therein, output stream \textbf{z}$_t$ of a combiner is obtained as
\begin{equation}
z_t = a^{1}_{t}.a^{2}_{t} + a^{2}_{t}.a^{3}_{t} + a^{3}_{t}.a^{1}_{t} \mbox{\;\;\;where\;\;} 0 \leq t \leq n-1
\end{equation}

Taking the DFT of \textbf{z} by Equation~\ref{DFT eq} with respect to $\gamma \in GF(2^{30})$ as a generator of ($x^{30} + x^{25} + x^{24} + x^{20} + x^{19} + x^{17} + x^{16} + x^{13} + x^{10} + x^9 + x^8 + x^7 + x^4 + x^2 + 1$) with generator $\gamma$ $\in GF(2^{30})$, 31 non zero DFT points are mentioned in Table~\ref{tab:combiner-gamma} below.

\begin{table}[H]
\small
\begin{center}
\caption[Sample Table]{Non Zero Spectral Points of \textbf{Z}}
\begin{tabular}{|c| c| c| c| c| c| c| c| c| c| c|  }
\hline
 Index & 27 & 31 & 54 & 62 & 77 & 91 & 108 & 124& 153 & 156   \\ \hline
  
Spectral Component & $\gamma^{15}$ & $\gamma^{186}$ & $\gamma^{30}$ &$ \gamma^{372}$ & $\gamma^{123}$ & $\gamma^{60}$ & $\gamma^{309}$ & $\gamma^{93}$ & $\gamma^{519}$ & $\gamma^{30}$   \\ \hline \hline 

Index & 182 & 201 & 213 & 216 & 248 & 306 & 308 & 339 & 341 & 364   \\ \hline
  
Spectral Component & $\gamma^{492}$ & $ \gamma^{618}$ & $\gamma^{480}$ &  $\gamma^{120}$ & $\gamma^{186}$ & $\gamma^{618}$ & $\gamma^{387}$ & $\gamma^{333}$ & $\gamma^{93}$ & $ \gamma^{30} $  \\ \hline \hline  

Index & 371 & 402 & 426 & 432 & 495 & 496 & 511 & 573 & 581 & 612  \\ \hline
  
Spectral Component & $\gamma^{387}$ & $\gamma^{585}$ & $\gamma^{309}$ & $\gamma^{240}$ &  $\gamma^{492}$ & $\gamma^{372}$ & $\gamma^{309}$  & $\gamma^{246}$ & $\gamma^{240}$ & $\gamma^{123}$  \\ \hline

Index & 616 &  &  &  &  &  &  &  &  &   \\ \hline
  
Spectral Component & $\gamma^{387}$ &  &  &  &   &  &   &  &  &   \\ \hline

\end{tabular}

\label{tab:combiner-gamma}
\end{center}

\end{table}

Linear complexity of $z_{t}$ is determined to be 31 through berlekamp-massey algorithm and the corresponding minimum polynomial $m(x)$ in this case is: \\
 	($x^{31} + x^{29} + x^{28} + x^{27} + x^{24} + x^{23} + x^{22} + x^{20} + x^{18} + x^{17} + x^{16} + x^{15} + x^{13} + x^{11} + x^{10} + x^9 + x^8 + x^7 + x^5 + x^4 + x^2 + x + 1 $).
Now DFT is taken with respect to generator $\sigma \in GF(2^{31})$ of $m(x)$ with order 651. We will only mention spectral points at those indices where constituent LFSR sequences have all non zero spectral points.
\begin{table}[H]
\small
\begin{center}
\caption[Sample Table]{Non Zero Spectral Points of \textbf{Z} with  element $\sigma$ of $m(x)$}
\begin{tabular}{|c| c| c| c| c| c| c| c| c| c| c|  }
\hline
 Index & 61 & 89 & 122 & 139 & 178 & 185 & 209 & 215 & 244 & 271   \\ \hline
  
Spectral Component & $\sigma^{492}$ & $\sigma^{387}$ & $\sigma^{333}$ &$ \sigma^{246}$ & $\sigma^{123}$ & $\sigma^{585}$ & $\sigma^{309}$ & $\sigma^{240}$ & $\sigma^{15}$ & $\sigma^{30}$   \\ \hline \hline 

Index & 278 & 325 & 356 & 370 & 395 & 418 & 430 & 433 & 461 & 488   \\ \hline
  
Spectral Component & $\sigma^{492}$ & $ \sigma^{60}$ & $\sigma^{246}$ &  $\sigma^{519}$ & $\sigma^{123}$ & $\sigma^{618}$ & $\sigma^{480}$ & $\sigma^{120}$ & $\sigma^{15}$ & $ \sigma^{30} $  \\ \hline \hline  

Index & 523 & 542 & 556 & 587 & 619 & 635 & 643 & 647 & 649 & 650  \\ \hline
  
Spectral Component & $\sigma^{387}$ & $\sigma^{60}$ & $\sigma^{333}$ & $\sigma^{519}$ &  $\sigma^{585}$ & $\sigma^{618}$ & $\sigma^{309}$  & $\sigma^{480}$ & $\sigma^{240}$ & $\sigma^{120}$  \\ \hline

\end{tabular}

\label{tab:combiner-sigma}
\end{center}

\end{table}
Now we apply our observations on CRT based fixed patterns in sequences and compute spectral components of \textbf{Z} directly by using Theorem~\ref{CRT-Mtheorem}. From individual DFTs of LFSR sequences as computed in Example~\ref{exm-2}, corresponding to non zero indices of \textbf{A}$^1$, \textbf{A}$^2$ and \textbf{A}$^3$, we first determine non zero index of \textbf{Z} through CRT using Equation~\ref{cor-index} as
\begin{eqnarray*}
x  &\equiv& 2\mbox{\; (mod} \mbox{\;}3) \\
      x  &\equiv& 6\mbox{\; (mod} \mbox{\;}7)\\
      x  &\equiv& 30\mbox{\; (mod} \mbox{\;}31)
      \end{eqnarray*}
      we get index of $650$. Now we compute spectral value of \textbf{Z}$_{650}$ respresented in terms of $\sigma \in GF (2^{31})$ through CRT using Theorem~\ref{CRT-Mtheorem} as 
\begin{eqnarray*}
      d  &\equiv& 0\mbox{\; (mod} \mbox{\;}3) \\
      d  &\equiv& 1\mbox{\; (mod} \mbox{\;}7)\\
      d  &\equiv& 27\mbox{\; (mod} \mbox{\;}31)
               \end{eqnarray*}      
we get  \textbf{Z}$_{650}$=$\sigma^{120}$.   Spectral components of other non zero indices of $\textbf{Z}$ along with all values of $\sigma^i $ with order 651 are mentioned at appendix A. These results reveal that irrespective of non linear function $f(x)$, degree of spectral components corresponding  to spectra of constitunet LFSR sequences is consistent even being in different fields. For instance \textbf{Z}$_{650}$=$\sigma^{120}$ for a generalized combiner case and 
 \textbf{S}$_{650}$=$\gamma^{120}$ for a product case (Example~\ref{exm-2})   have same degree with different values of spectral components being $\sigma \in GF(2^{31})$ and $\gamma \in GF(2^{30})$.
 
\end{example}
\subsection{Complexity of CRT Based DFT Computations}
In this subsection, discussion on computational complexity of CRT based DFT calculations in comparison to classical DFT is presented. DFT in binary fields from Equation~\ref{DFT eq-poly} dictates that the complexity for computing each \textbf{S}$_k$ is equilvalent to cost for evaluating polynomial $s(x)$ at $\alpha^{-k}$~\cite{gongcloser} where $\psi = \alpha^{-k}$. In terms of exclusive-or operations, we have:-
\begin{enumerate}
\item The complexity of computing minimum polynomial of a sequence $\in GF(2)$ through berlekamp massey algorithm is $\mathcal{O}(m \; log\; m)$.
\item The complexity of multiplying two polynomials of degree $m$ is 
\begin{equation*}
\mathcal{O}(m\; log\; m \;log\;log\;m)
\end{equation*}
\item The complexity of solving system of $r$ linear equations over $GF(2^m)$ is 
\begin{equation*}
\mathcal{O}(r^{2.37}\;m\; log\; m \;log\;log\;m)
\end{equation*}
\item The complexity in terms of Xor operations for computing each \textbf{S}$_k$ using the Equation~\ref{DFT eq-poly} is 
\begin{equation*}
 \mathcal{O}((\;m\; log\; m \;log\;log\;m)[(log\;(k) + deg (s(x)) ])
 \end{equation*}

\end{enumerate}

For CRT based computations of spectral components from constituent spectral components, we will consider a case of product  of two  LFSR sequences which can be generalized for a combiner generator. Let we have two sequences \textbf{a} $\in GF(2^p)$ and \textbf{b} $\in GF(2^q)$. It is trivial to mention that  
\begin{equation*}
\mbox{Complexity of DFT (\textbf{s})} \gg \mbox{Complexity of [DFT (\textbf{a}) +  DFT (\textbf{b})]}
\end{equation*}

For each \textbf{S}$_k$, additional computational complexity for CRT is $\mathcal{O}(len(n)^2)$. As non zero terms of \textbf{S}$_k$ are equilavalent to the linear span of the sequence, thus total cost of CRT based computataional step of spectral components is $\mathcal{O}(LS(s)\;.\;len(n)^2)$, where $LS$ is linear span of the sequence \textbf{s}. Thus CRT based computations of spectral components of \textbf{s} for combiner generators are far efficient than classical methods of DFT computations in binary fields.

\section{CRT and Cryptanalysis of Combiner Generators}
In this subsection, discussion on application of our novel results on CRT based fixed patterns in cryptanalysis of combiner sequences is made. From discussion made in Section 3 on established linkage between
period of LFSR sequence, effect of left shifts of LFSR initial states and mathe-
matical rationale through CRT, let we demonstrate application of our observations on analysis of combiner generators.

\begin{example}

With same structure of combiner generator mentioned in Example~\ref{ex-combiner}, suppose we know 10 bits of keystream $u_t$ = [1011110001].
During off-line computations, we will generate 651 bits of reference stream i.e. $s_{t}$ with initial fills of all three LFSRs as '1' which comes out to be:
	\\
	$0010110101110110110110110110101010110110110010111000110110110$\\
$110010111011011110100110010111010010100101110$....................
$1010$\\$11010010111001011110001111010111$\\

Comparing the ten known bits of keystream $u_t$ = $[1011110001]$ with reference sequence $s_{t}$ , index position of known bits is determined as $k$=632. Thus 
\begin{equation*}\label{eq:shift-algo-app}
		u_i = s_{i+632},\;\;\;\;\;\;\; \forall \; i \; \geq 0.
		\end{equation*}
		
After determining index position of ten known bits of $u_t$ in reference stream $s_t$, we will determine initial states of LFSR by simply applying modular computations of CRT as follows: 	
\begin{eqnarray*}
      k_1  &\equiv& 632 \mbox{\; (mod \;} 3 ) \\
      k_2  &\equiv& 632 \mbox{\; (mod \;} 7 ) \\ 
      k_3  &\equiv& 632 \mbox{\; (mod \;} 31) \\  
               \end{eqnarray*}	
	Therefore, $k_1 \equiv 2 \mbox{\;}(\mbox{mod\;} 3)$ , $k_2 \equiv 2\mbox{\;} (\mbox{mod\;} 7)$ and $k_3 \equiv 12\mbox{\;} (\mbox{mod\;} 31)$.		
By using Equation~\ref{eq:trace}, initial states of LFSRs is determined as given in Table~\ref{tab:initial states-3LFSRs} below.

\begin{remark}

Generating the complete period of reference sequence \textbf{s}$_t$ followed by finding few known bits of available keystream \textbf{u}$_t$ in a complete period of 
\textbf{s}$_t$ may not be computationally feasible for sequence of larger periods which infact is the case of practical stream ciphers. However, the example is given to demonstrate the existing cyclic structures of LFSR based sequences designs and their CRT based interpretation. 
\end{remark}		

Now, let frequency domain analysis of combiner sequences is made in the light of our results on CRT based relevance of spectral components. With the same notataions as of Example~\ref{ex-combiner}, if spectral component \textbf{Z}$_k$ is computed from known ciphertext stream by any method where $k$ corresponds to all non zero unknown spectral componenets of constituent LFSR sequences, these individual spectral componenets are computed using Theorem~\ref{CRT-Mtheorem} and Corollary~\ref{cor-index}. With known spectral components of constituent LFSRs, initial states of LFSRs is determined by using Equation~(\ref{DFT shift eq}) and (\ref{eq:trace}). For instance, for \textbf{Z}$_{650}$=$\sigma^{101}$, we will do modular computations to determine the spectral component of individual LFSRs as 
\begin{eqnarray*}
      101  &\equiv& 2\mbox{\; (mod} \mbox{\;}3) \\
      101  &\equiv& 3\mbox{\; (mod} \mbox{\;}7)\\
      101  &\equiv& 10\mbox{\; (mod} \mbox{\;}31)
               \end{eqnarray*}

We get $\alpha^2 \in GF(2^2)$  at  \textbf{A}$^{1}_{2}$, $\beta^3 \in GF(2^3)$ at \textbf{A}$^{2}_6$ and $\gamma^{10} \in GF(2^5)$ at \textbf{A}$^{3}_{30}$. Now shift value for each LFSR is computed using Equation~(\ref{DFT shift eq}) as 
\begin{equation}\label{DFT shift eq-2}
  \alpha^{\tau} =  ( Z_{k} . S_{k}^{-1} )^{k^{-1}}
\end{equation}
 where $\tau$ determines the exact amount of shift between $s_t$ and $z_t$ and k is index of any one component of DFT spectra.
 
 Having determined the exact shift value for each LFSR, their initial states will be computed using Equation~(\ref{eq:trace}) within each subfield $GF(2^j)$ as
	\begin{eqnarray*}	
      b_{t}^{1} &= &Tr_{1}^{n} (\alpha^{*} \alpha^{t}) \\
      b_{t}^{2} &= &Tr_{1}^{n} (\beta^{*} \beta^{t}) \\
      b_{t}^{3} &= &Tr_{1}^{n} (\gamma^{*} \gamma^{t})       
			\end{eqnarray*}
where 
		\begin{eqnarray*}
      \alpha^{*}_i  &=& \alpha^{\tau_i}
       \\
      \beta^{*}_j  &=& \beta^{\tau_j}
       \\ 
       \gamma^{*}_l  &=& \gamma^{\tau_l}
        	\end{eqnarray*}
		
	 The initial fills of LFSRs with refernce to intial state of '1' for all LFSRs with 1 left shift in $a^{1}_t$, 5 left shifts in $a^{2}_t$ and 19 left shifts in $a^{3}_t$ gives:
 	
\begin{table}[H]
\begin{center}
\caption[Sample Table]{Initial States of 3 LFSRs}
\begin{tabular}{|c| c| }
\hline
& Initial State 
   \\ \hline
 LFSR-1 & 10   \\ \hline
 LFSR-2 & 101   \\ \hline
 LFSR-3 & 01111   \\ \hline

\end{tabular}

\label{tab:initial states-3LFSRs}
\end{center}
\end{table}	

 \end{example}
 \begin{remark}
Application of our results on CRT based fixed patterns in combiner sequences are valid for any configuration of non linear combining function. However, point of concern for cryptanalysis is computaion of \textbf{S}$_k$ in a typical scenerio of ciphertext only attack where limitataion of known keystream bits is always a driving factor for practability of the attack. For computations of DFT spectral component, complete period of ciphertext is required which is practically not the case for cryptnalaysis attacks. Fast discrete fourier spectra attacks~\cite{gong2011fast} provide an efficient methodology to compute particular spectral points when number of known bits are far less than the complete period of the stream. Our CRT based methodolgy can be utilized in conjunction with both the DFT finding algorithms proposed in~\cite{gong2011fast} when number of known key stream bits are equal to linear span of the sequnece or even lesser than that. Detailed results on efficiency of this proposed methodology will be presented separately.
\end{remark}

\begin{remark}
With regards to cryptanalysis attacks on combinatorial sequence generators, correlation attacks~\cite{siegenthaler1985decrypting}  and their faster variants~\cite{meier1989fast} are conisdered to be the most efficient attacks~\cite{canteaut2011stream}. Computational cost of our proposed methodology of DFT spectral points, even by employing fast discrete fourier spectra attacks,  is more than correlation attacks. However, in a scenerio of correlation immune non linear boolean functions when coorelation attacks are not succesful, our proposed methodology of CRT based spectral computataions is still valid which will be addressed at a separate forum.

\end{remark}

\section{Conclusion}
In this paper, new results on  CRT based analysis of combinatorial sequences have been  presented. We explored inherent peculiarities of the LFSR based combiner generators through novel patterns identified with the help of a CRT based approach. These findings were then extended to the product sequences and more particularly to the combinatorial generators. An effort was made to establish the mapping of different operations from time domain to frequency domain. Novel results on fixed shift patterns of LFSRs, their relationship to cyclic structures in finite fields and CRT based interpretation of these patterns have been exploited to establish direct relevance of final keystreams of combiner generators to individual LFSR sequences. Based on these CRT based fixed structures, new methodology of direct computating the spectral components of sequences in larger finite fields from constituent spectra of smaller fields is also presented. These new results on CRT based structural analysis of LFSR based combiners are demonstrated on small scale sequence generators with brief discussion on involved computational costs and practability of these techniques in cryptanalysis attacks.

\bibliographystyle{plain}
\bibliography{ff}

\end{document}